\documentclass[11pt,a4paper]{article}
\usepackage[margin=25mm]{geometry}
\usepackage[T1]{fontenc}
\usepackage{authblk}
\usepackage{lmodern}
\usepackage{amsmath,amssymb,mathtools}
\usepackage{amsthm}
\usepackage{thmtools}

\usepackage{microtype}
\usepackage[hidelinks]{hyperref}
\usepackage{enumitem}
\usepackage[authoryear]{natbib}
\setlist[itemize]{nosep,leftmargin=18pt}

\newtheorem{lemma}{Lemma}
\newtheorem{remark}{Remark}
\newtheorem{corollary}{Corollary}
\newcommand{\K}{\mathbb K}

\newcommand{\MM}[1]{\langle #1\rangle}

\title{A lower bound for \texorpdfstring{$\langle 3,2,m \rangle$}{(3,2,m)} matrix multiplication}

\author{Askar Tsyganov\footnote{Corresponding author: \texttt{atsyganov@hse.ru}}}
\author{Uliana Parkina}
\author{Sergey Samsonov}
\author{Maxim Rakhuba}

\affil{HSE University}

\date{}

\begin{document}
\maketitle
\begin{abstract}
We prove that, over any field, the bilinear complexity of multiplying a $3\times 2$ matrix by a $2\times m$ matrix is strictly greater than $24m/5$. In particular, every exact bilinear algorithm for multiplying a $3\times 2$ matrix by a $2\times 5$ matrix requires at least $25$ multiplications. Together with the Hopcroft-Kerr upper bound, this proves that the $\langle 3,2,5\rangle$ matrix multiplication tensor has rank exactly $25$. The proof has been formally verified in Lean~4, with the formalization available at \url{https://github.com/fallnlove/mm325_proof}.
\end{abstract}

\section{Introduction} 

Let $\K$ be an arbitrary field, and let $m \geq 1$ be an integer.
We study the bilinear complexity of multiplying a $3\times 2$ matrix
by a $2\times m$ matrix. A bilinear algorithm of length $r$ for the
corresponding matrix multiplication tensor $\MM{3,2,m}$ is an identity
\begin{equation}\label{eq:algorithm}
 AB=\sum_{i=1}^{r}\alpha_i(A)\beta_i(B)C_i,
 \qquad
 \forall A\in\K^{3\times2},\quad
 \forall B\in\K^{2\times m},
\end{equation}
where $\alpha_i:\K^{3\times2}\to\K$ and
$\beta_i:\K^{2\times m}\to\K$ are $\K$-linear maps, and
$C_i\in\K^{3\times m}$ are fixed matrices for $1\leq i\leq r$.
The tensor rank $R_{\K}(\MM{3,2,m})$, equivalently the bilinear
complexity of this multiplication, is the minimum length of such
an algorithm.

Several lower bounds are known for this family of tensors.
Over an arbitrary field $\K$, the general lower bound of
\citet{blaser1999lower}, together with the invariance of matrix
multiplication tensor rank under permutations of the dimensions,
gives
\[
 R_{\K}(\MM{3,2,5})
 =
 R_{\K}(\MM{2,3,5})
 \geq 22.
\]
Over $\mathbb{F}_2$, \citet{ja1985improved} proved the bound
$
 R_{\mathbb{F}_2}(\MM{3,2,m})
 \geq \left\lceil \frac{105m}{23} \right\rceil,$
which yields a lower bound of $23$ when $m=5$.
More recently, building on the work of \citet{shitov2026},
\citet{yang2026newlowerboundstensor} established the bounds
\[
 R_{\K}(\MM{n,2,m}) \geq (n+2)m
 \quad \text{for } n\geq 4, \quad\text{ and } \quad R_{\K}(\MM{3,2,m})
 \geq \left\lceil \frac{24m}{5} \right\rceil,
\]
both over arbitrary fields.
For $m=5$, the latter gives
$R_{\K}(\MM{3,2,5})\geq 24$.

In this paper, we refine the argument of
\citet{yang2026newlowerboundstensor} to exclude equality in the bound
$R_{\K}(\MM{3,2,m})\geq 24m/5$. The main idea of the proof was obtained by OpenAI's \texttt{gpt-6-astra}.

\begin{restatable}{theorem}{mainthm}
\label{thm:main}
Let $\K$ be a field, and let $m \geq 1$ be an integer. Then
\[
 R_{\K}(\MM{3,2,m})>\frac{24m}{5}.
\]
\end{restatable}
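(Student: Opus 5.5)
The plan is to re-run the substitution/weighting argument of \citet{yang2026newlowerboundstensor} that gives $R_{\K}(\MM{3,2,m})\ge 24m/5$, and to show that equality forces a rigid structure which cannot exist. Suppose for contradiction that an algorithm~\eqref{eq:algorithm} of length $r\le 24m/5$ exists. In particular $5\mid m$ would be needed for equality when $r$ is an integer, but the argument will not use this; it will only use that every inequality in the chain must be tight.

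\textbf{Step 1 (recall the counting).} Each $\alpha_i$ is a linear form on $\K^{3\times 2}$, and so has rank $1$ or $2$ as a $3\times 2$ matrix. Write $r=r_1+r_2$ accordingly. Restrict $A$ to suitable subspaces: a row-restriction $A$ with one row zero gives a $\MM{2,2,m}$ algorithm, and column-type substitutions (fixing $A e_j$) give $\MM{3,1,m}$-type constraints. The known bounds are $R(\MM{2,2,m})\ge \lceil 7m/2\rceil$ (Hopcroft–Kerr/Winograd type, valid over any field) and the trivial $3m$ for $\MM{3,1,m}$. Applying these restrictions and averaging over the three rows and over a choice of lines gives inequalities of the form $r - c_k \ge 7m/2$, where $c_k$ counts the forms killed by restriction $k$. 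Summing with the weights used by Yang yields $5r\ge 24m$. I would reconstruct the exact weighting by requiring that rank-1 and rank-2 forms contribute equally, which should reproduce the $24/5$ constant.

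\textbf{Step 2 (tightness analysis).} If $5r=24m$, every averaged inequality must be an equality. Two consequences follow. First, every row-restriction yields an \emph{optimal} $\MM{2,2,m}$ algorithm of length exactly $7m/2$, so $m$ is even and each restriction is tight. Second, the distribution of $\alpha_i$ among the rank/support classes is forced: for instance, the number of forms vanishing on each row-deletion is exactly prescribed. Then I would use the structure of tight algorithms for $\MM{2,2,m}$, which are direct sums of Strassen-type blocks and $\MM{2,2,2}$ pieces or similar. Even weaker information suffices, such as the fact that the $\beta_i$ span must meet every rank-one $B$ direction with prescribed multiplicity. I would then derive a counting contradiction from parity or divisibility: $m$ must be divisible by both $2$ and $5$, and the forced multiplicities of rank-1 forms per row would have to be a non-integer or inconsistent.

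\textbf{Step 3 (contradiction).} The most promising route is to take the tight case and apply one further substitution that the averaging argument did not exploit. I would choose $A$ generic in the span of two rows in order to obtain a $\MM{2,2,m}$ instance in which strictly more than the counted number of products vanish or become dependent. This would give $r<7m/2+(\text{counted})$ and hence a strict inequality. Concretely, I would try the following: a rank-2 form that is tight for all three row-deletions must be supported on all rows, and a dimension count on the $3$-dimensional family of row-combinations then shows that some line kills an extra product.

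I expect Step 3 to be the main obstacle. It requires identifying exactly which extra degeneracy is forced in the equality case, and the argument must work over every field, including $\mathbb{F}_2$, where genericity arguments fail. For that reason I would first try a purely combinatorial pigeonhole over the finitely many $\K$-rational lines in the relevant row space, avoiding any reliance on generic choices.
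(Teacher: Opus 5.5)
There is a genuine gap. What you have written is a research plan in which no step is actually carried out, and Step~1 cannot produce $24m/5$ as described.

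Every restriction you list annihilates only rank-one forms $\alpha_i$:
\begin{itemize}
\item Forcing the columns of $A$ into a plane $U\subset\K^3$ (your row-restriction) kills exactly those $\alpha_i$ whose coefficient matrix has both columns on the line $U^{\perp}$.
\item Zeroing one column of $A$, or imposing $Ae_2=\lambda Ae_1$, kills only forms whose coefficient matrix has a nonzero kernel vector.
\end{itemize}
Nothing in your argument rules out an algorithm all of whose $\alpha_i$ have rank two. For such an algorithm every $c_k$ vanishes, so any weighted sum of your inequalities $r-c_k\ge 7m/2$ (or $\ge 3m$) gives only $r\ge 7m/2$. The constant $24/5$ therefore cannot be ``reconstructed'' by choosing weights.

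Step~2 relies on a structure theorem for optimal $\MM{2,2,m}$ decompositions (direct sums of Strassen-type blocks) that you would need over every field, and no such classification is available. Step~3 is, by your own account, not yet identified.

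Your worry about $\mathbb{F}_2$ is unnecessary. An algorithm over $\K$ is also an algorithm over $\K(t)$, so $R_{\K}\ge R_{\K(t)}$, and it suffices to prove the lower bound over infinite fields, where generic choices are legitimate.

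The missing ingredient is not a reduction to $\MM{2,2,m}$ but a direct linear-algebraic count.
\begin{itemize}
\item \textbf{Normalization.} Over an infinite field, arrange that $\alpha_i(E_{11})=1$ for all $i$, and that forms sharing the same pair $(\alpha_i(E_{21}),\alpha_i(E_{31}))$ span at most a two-dimensional space.
\item \textbf{Setup.} Let $H=\ker\bigl(z\mapsto\sum_i z_iC_i\bigr)$, of dimension $h=r-3m$, and $J=\{(\beta_i(\binom{0}{v}))_i\}\subseteq H$, of dimension $m$. Let $S,T,Z$ be the diagonal operators recording $\alpha_i(E_{21}),\alpha_i(E_{31}),\alpha_i(E_{12})$.
\item \textbf{Counting.} A system of three membership-in-$J$ conditions on $J^3$ has solution space exactly $0\times0\times F$, where $F=\{y\in J: Sy,Ty\in J\}$. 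Bounding its rank in two ways gives $6m-3h\le f\le 2h-3m$, hence $5h\ge 9m$, i.e.\ $r\ge 24m/5$.
\item \textbf{Equality case.} Tightness forces $F\neq 0$ to be $S$- and $T$-invariant. Since $S,T$ are diagonal, a minimal-support argument gives a common eigenvector $y\in F$. All indices $i$ with $y_i\neq 0$ then share the same $(s_i,t_i)$, so their $\alpha_i$ span at most two dimensions. Yet the algorithm identity with $B=\binom{0}{v}$ forces these forms to realize the rank-three map $A\mapsto AB$, a contradiction.
\end{itemize}
This rigidity, rather than parity or divisibility of $m$ or extra degeneracies of $\MM{2,2,m}$ restrictions, is what excludes equality.
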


Since tensor rank is integer-valued, Theorem~\ref{thm:main}
improves \citep{yang2026newlowerboundstensor} lower bound by one whenever $m$ is divisible by $5$.
In particular, it gives $R_{\K}(\MM{3,2,5})\geq 25$.
Together with the matching upper bound of
\citet{hopcroft1971minimizing}, this determines the exact bilinear
complexity of multiplying a $3\times2$ matrix by a $2\times5$ matrix.

\begin{corollary}
Let $\K$ be a field. Then
\[
 R_{\K}(\MM{3,2,5}) = 25.
\]
\end{corollary}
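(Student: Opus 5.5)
The corollary follows by combining the lower bound of Theorem~\ref{thm:main} at $m=5$ with the upper bound of \citet{hopcroft1971minimizing}. No new argument is needed beyond checking the arithmetic on both sides.

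For the lower bound, I would apply Theorem~\ref{thm:main} with $m=5$, which gives $R_{\K}(\MM{3,2,5})>24\cdot 5/5=24$. Tensor rank is by definition the minimum length $r$ of an algorithm of the form~\eqref{eq:algorithm}, so it is a nonnegative integer. A strict inequality $>24$ therefore upgrades to $R_{\K}(\MM{3,2,5})\geq 25$.

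For the upper bound, I would invoke the Hopcroft--Kerr construction. It multiplies a $p\times 2$ matrix by a $2\times n$ matrix using $\lceil (3pn+\max(n,p))/2\rceil$ multiplications, and works over any field since it uses no divisions and only coefficients in $\{0,\pm1\}$. With $p=3$, $n=5$ this gives $\lceil (45+5)/2\rceil=25$, so $R_{\K}(\MM{3,2,5})\leq 25$.

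Combining the two bounds yields equality. The only point needing care is that the cited upper bound is field-independent, in particular valid in characteristic $2$. Hopcroft and Kerr state their algorithms for commutative rings without using division, so I would verify this from the cited construction. If their statement were for a different orientation of the dimensions, I would use the invariance of matrix multiplication rank under permutations of $(3,2,5)$, which the introduction already uses.
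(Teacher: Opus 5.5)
Your proposal is correct and matches the paper's argument. Applying Theorem~\ref{thm:main} at $m=5$ gives $R_{\K}(\MM{3,2,5})>24$, hence $\geq 25$ by integrality, and the Hopcroft--Kerr count $\lceil(3\cdot3\cdot5+5)/2\rceil=25$ supplies the matching upper bound over every field.
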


\paragraph{Notation.}
We write $E_{ij}$ for the matrix with a $1$ in position $(i,j)$
and zeros elsewhere, with dimensions determined by context.
The symbol $\mathbb{F}_2$ denotes the field with two elements.

\section{Preliminary lemmas}

\begin{lemma}[Normalization]\label{lem:normalization}
Let $\K$ be an infinite field and let $m,r\in\mathbb Z_{>0}$.
Assume that there exists a bilinear algorithm of length $r$
that computes $AB$ exactly for every $A\in\K^{3\times2}$ and
$B\in\K^{2\times m}$, with all first-input linear forms nonzero. Then there exist
$\K$-linear forms $\alpha_i:\K^{3\times2}\to\K$,
$\beta_i:\K^{2\times m}\to\K$, and matrices
$C_i\in\K^{3\times m}$ ($i \in \{1, ..., r\}$), satisfying
\eqref{eq:algorithm} and
\[
 \forall i\in\{1,\ldots,r\},\qquad \alpha_i(E_{11})=1.
\]
With $s_i=\alpha_i(E_{21})$ and $t_i=\alpha_i(E_{31})$, these
forms also satisfy
\begin{equation}
\label{eq:dim_span}
 \forall(s,t)\in\K^2,\qquad
 \dim_{\K}\operatorname{span}_{\K}
 \bigl\{\alpha_i\mid 1\leq i\leq r,\ s_i=s,\ t_i=t\bigr\}
 \leq2.
\end{equation}
The span is taken in the dual space $(\K^{3\times2})^*$.
\end{lemma}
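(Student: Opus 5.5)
We will normalize by a change of basis on the first input, followed by a rescaling of each summand.

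\emph{Step 1 (change of basis).} Since $\K$ is infinite and the finitely many forms $\alpha_1,\dots,\alpha_r$ are nonzero, the union of their kernels is a finite union of proper hyperplanes. It therefore cannot cover $\K^{3\times 2}$, so we can choose $A_0$ with $\alpha_i(A_0)\neq 0$ for all $i$.

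We want to replace $A_0$ by $E_{11}$ using an admissible symmetry of $\MM{3,2,m}$. The transformations $A\mapsto PAQ$, $B\mapsto Q^{-1}B$, $C\mapsto PC$ (with $P\in GL_3$, $Q\in GL_2$) carry the algorithm to another algorithm of the same length. However, they only move $A_0$ within its rank class, so they reach $E_{11}$ only if $A_0$ has rank $1$. We must therefore pick $A_0$ of rank one, i.e.\ $A_0=uv^{T}$.

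For fixed $i$, the map $(u,v)\mapsto\alpha_i(uv^T)$ is a bilinear form. It is not identically zero, because rank-one matrices span $\K^{3\times2}$ and $\alpha_i\ne0$. Hence the polynomial $\prod_i\alpha_i(uv^T)$ in the coordinates of $(u,v)$ is a nonzero polynomial. Over an infinite field a nonzero polynomial has a non-root, which gives $u,v$ with $\alpha_i(uv^T)\neq0$ for all $i$.

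Choose invertible $P,Q$ with $P^{-1}u=e_1$ and $Q^{-1}v=e_1$ (so $v^TQ=e_1^T$), i.e.\ $uv^T=PE_{11}Q$. Define $\alpha_i'(A)=\alpha_i(PAQ)$, $\beta_i'(B)=\beta_i(Q^{-1}B)$, and $C_i'=P^{-1}C_i$. Then
\[
\sum_i\alpha_i'(A)\beta_i'(B)C_i'=P^{-1}(PAQ)(Q^{-1}B)=AB,
\]
and $\alpha_i'(E_{11})\neq0$ for every $i$.

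\emph{Step 2 (rescaling).} Replace $\alpha_i$ by $\alpha_i/\alpha_i(E_{11})$ and $\beta_i$ by $\alpha_i(E_{11})\beta_i$. This keeps the identity \eqref{eq:algorithm} and gives $\alpha_i(E_{11})=1$ for all $i$.

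\emph{Step 3 (the dimension bound \eqref{eq:dim_span}).} Fix $(s,t)$ and let $S=\{i: s_i=s,\ t_i=t\}$. Every $\alpha_i$ with $i\in S$ takes the same values on $E_{11},E_{21},E_{31}$, namely $1,s,t$. Write $\alpha_i=\lambda+\mu_i$, where $\lambda$ is the fixed form with values $(1,s,t)$ on the first column and $0$ on the second column, and $\mu_i$ vanishes on the first column. The $\mu_i$ lie in a $3$-dimensional space, so the naive bound on $\dim\operatorname{span}\{\alpha_i\}_{i\in S}$ is $1+3=4$, not $2$.

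This is the step we expect to be the main obstacle, since Steps~1 and~2 alone cannot force it. It needs a genuine argument that uses exactness of the algorithm together with some freedom in the choice of $(u,v)$.

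We would first try to exploit the remaining freedom: $Q$ is only fixed up to matrices fixing $e_1$, and $P$ up to the stabilizer of $e_1$. A more promising route is an exchange argument. Suppose $\dim\operatorname{span}\{\alpha_i\}_{i\in S}\ge3$. Restrict the algorithm to inputs $A$ annihilated by two independent forms of that span; this is the substitution-method style argument already used by \citet{yang2026newlowerboundstensor}. From the restriction we would either derive a contradiction with exactness or re-choose $u,v$ generically so that the common values $(s,t)$ split.

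Concretely, we would treat the genericity requirement as a further polynomial non-vanishing condition on $(u,v)$. The conditions would be: $\alpha_i(uv^T)\neq0$ for all $i$, and that no three forms with independent second-column parts share the same normalized first-column ratios. We would then show this condition is a nonzero polynomial. Such a polynomial has a non-root over an infinite field, and $(u,v)$ chosen there would satisfy all the requirements at once.
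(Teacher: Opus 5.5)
Your Steps~1 and~2 are essentially sound. There is one slip: $PE_{11}Q=(Pe_1)(e_1^\top Q)$, so you need the first \emph{row} of $Q$ to equal $v^\top$, not $Q^{-1}v=e_1$.

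Step~3, however, which is the real content of the lemma, is only a list of intentions, so the proposal has a genuine gap. It is also aimed in the wrong direction. The bound \eqref{eq:dim_span} uses neither exactness of the algorithm nor a substitution argument. It holds for any finite family of nonzero forms once the rank-one direction is chosen generically in the right sense.

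To see this, write $\alpha_i([x\ y])=a_i^\top x+b_i^\top y$. After your transformation, the first-column coefficient vector of $\alpha_i'$ is $P^\top(v_1a_i+v_2b_i)$. Hence $(s_i,t_i)=(s_j,t_j)$ exactly when $v_1a_i+v_2b_i$ and $v_1a_j+v_2b_j$ are proportional. The grouping therefore depends only on $v$; the choice of $u$ is irrelevant to it.

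Your tentative genericity condition is also too weak as stated. Within a group with common first-column part $\lambda$, one has $\dim\operatorname{span}\{\lambda+\mu_i\}=1+\dim\operatorname{span}\{\mu_i-\mu_1\}$. So you need the $\mu_i$ to be affinely collinear. For example, $\mu_1=0$ together with two independent $\mu_2,\mu_3$ satisfies your condition but gives dimension $3$. You also never show that any such condition is cut out by a nonzero polynomial.

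The missing argument, which the paper supplies, has two parts.

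\begin{enumerate}
\item \textbf{Choosing $v$.} Take $v=(1,\tau)$ with $a_i+\tau b_i\neq0$ for all $i$. Also require $\tau$ to avoid the roots of every $2\times2$ minor of $[a_i+tb_i\ \ a_j+tb_j]$ that is not identically zero in $t$. Each such minor has degree at most $2$, so only finitely many values are excluded. Then proportionality at $\tau$ forces proportionality for every $t$.
\item \textbf{Bounding an unsplittable group.} Show that such a group spans at most two dimensions.
\begin{itemize}
\item Suppose some member has $a_i,b_i$ independent. Taking $t=0$ and $t=1$ gives $a_j=c\,a_i$ and $b_j=(d-c)a_i+d\,b_i$. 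The rank condition for general $t$ then reduces to $t(1-t)(d-c)=0$. Hence $d=c$ and $\alpha_j=c\,\alpha_i$.
\item Otherwise, each pair $a_j,b_j$ lies on the line spanned by the nonzero vector $a_j+\tau b_j$. This is the same line $\K a$ for the whole group, so all its forms lie in $\operatorname{span}\{a^\top x,\,a^\top y\}$.
\end{itemize}
\end{enumerate}

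Without this argument, or an equivalent one, your proposal does not establish \eqref{eq:dim_span}.
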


\begin{remark}
    A similar result was obtained by \citet{yang2026newlowerboundstensor}, but the dimension inequality~\eqref{eq:dim_span} is not proved there. For clarity, we reproduce the relevant part of the argument here.
\end{remark}

\begin{proof}
Write the columns of $A$ as $x,y\in\K^3$. Each original form has
unique coefficient vectors $a_i,b_i\in\K^3$ such that
\[
 \alpha_i([x\ y])=a_i^{\top}x+b_i^{\top}y.
\]
Choose a scalar $\tau$ with two properties: every $a_i+\tau b_i$ is
nonzero, and two such vectors are proportional only if
$a_i+t b_i$ and $a_j+t b_j$ are linearly dependent for every $t$.
This choice is possible for an elementary reason. Each vanishing
vector excludes at most one value of $\tau$. The $2\times2$ minors
of the matrix with columns $a_i+t b_i$ and $a_j+t b_j$ are
polynomials of degree at most two. Avoid the roots of every minor
that is not the zero polynomial. There are only finitely many
forbidden values.

Partition $\{1,\ldots,r\}$ into the equivalence classes of
\[
 i\sim j\quad\Longleftrightarrow\quad
 \exists c\in\K\setminus\{0\}:\quad
 a_i+\tau b_i=c(a_j+\tau b_j).
\]
Fix an equivalence class. Suppose that $a_i,b_i$ are linearly
independent for some index $i$ in this class, and let $j\sim i$.
By the choice of $\tau$,
\[
 \forall t\in\K,\qquad
 \operatorname{rank}
 \begin{bmatrix}a_i+t b_i & a_j+t b_j\end{bmatrix}
 \leq 1.
\]
Taking $t=0$ and $t=1$, respectively, gives
\[
 a_j=\lambda a_i,
 \qquad
 a_j+b_j=\mu(a_i+b_i)
\]
for some $\lambda,\mu\in\K$. Hence
\[
 b_j=(\mu-\lambda)a_i+\mu b_i.
\]
Therefore
\[
 \begin{bmatrix}a_i+t b_i & a_j+t b_j\end{bmatrix}
 =
 \begin{bmatrix}a_i & b_i\end{bmatrix}
 \begin{pmatrix}
  1 & (1-t)\lambda+t\mu\\
  t & t\mu
 \end{pmatrix}.
\]
Since $\begin{bmatrix}a_i & b_i\end{bmatrix}$ has full column
rank, the $2\times2$ matrix on the right has rank at most one.
Consequently,
\[
 \forall t\in\K,\qquad
 0=
 \det\begin{pmatrix}
  1 & (1-t)\lambda+t\mu\\
  t & t\mu
 \end{pmatrix}
 =t(1-t)(\mu-\lambda).
\]
Since $\K$ is infinite, we may choose $t\notin\{0,1\}$,
which yields $\mu=\lambda$. Thus
\[
 a_j=\lambda a_i,\qquad b_j=\lambda b_i,
 \qquad\alpha_j=\lambda\alpha_i.
\]
Hence the forms in this class span a space of dimension at most one.
If instead $a_i,b_i$ are dependent for every index in the group,
all these vectors lie on the same line: that line contains the
nonzero vector $a_i+\tau b_i$. For a fixed nonzero vector $a$ on
this line, every form in the group is a linear combination of
$a^{\top}x$ and $a^{\top}y$. This proves the claim.

Now make the change of columns $[x\ y]\mapsto[x\ y+\tau x]$.
The coefficient of the first column becomes $a_i+\tau b_i$.
Choose the first basis vector of $\K^3$ so that none of these
nonzero linear forms vanishes on it, and complete it to a basis.
To justify this choice, define
\[
 \begin{aligned}
 q_i(u)&=(a_i+\tau b_i)^{\top}(1,u,u^2)^{\top}\\
       &=(a_{i1}+\tau b_{i1})
        +(a_{i2}+\tau b_{i2})u+(a_{i3}+\tau b_{i3})u^2,
        \qquad 1\leq i\leq r.
 \end{aligned}
\]
Because $a_i+\tau b_i\neq0$, each $q_i\in\K[u]$ is a nonzero
polynomial of degree at most two. The union of their root sets
is finite, whereas $\K$ is infinite. Therefore
\[
 \exists u\in\K\ \forall i\in\{1,\ldots,r\},\qquad q_i(u)\neq0.
\]
Choose such a $u$ and take $(1,u,u^2)^{\top}$ as the first
basis vector. In the resulting coordinates, the matrix unit
\[
 E_{11}=\begin{pmatrix}1&0\\0&0\\0&0\end{pmatrix}
\]
has first column $(1,0,0)^{\top}$ and second column zero.
For the transformed forms, again denoted by $\alpha_i$, we have
\[
 \forall i\in\{1,\ldots,r\},\qquad \alpha_i(E_{11})=q_i(u)\neq0.
\]
Divide each $\alpha_i$ by this value, and multiply the corresponding
$\beta_i$ by the same value. The first-column coefficient vector
of the resulting form is $(1,s_i,t_i)^{\top}$.
Thus two pairs $(s_i,t_i)$ coincide exactly when their earlier
first-column coefficient vectors were proportional. The groups
are the ones already considered, and their span dimensions are
preserved by the invertible coordinate changes and nonzero rescalings.

For completeness, these coordinate changes preserve algorithm
length because they can be compensated in the second input and
in the output. For invertible $L\in\K^{3\times3}$ and $G\in\K^{2\times2}$,
\[
 (LAG)(G^{-1}B)=LAB.
\]
Apply the original algorithm to the two inputs on the left, then
multiply its output by $L^{-1}$. This proves the lemma for the
changed coordinates without changing the number of summands.
\end{proof}

\begin{lemma}[Dimension inequality and equality case]\label{lem:count}
Let $E$ be a finite-dimensional vector space over a field $\K$,
and let $J\subseteq H\subseteq E$ be linear subspaces with
$\dim J=m>0$ and $\dim H=h$.
Let $S,T,Z:E\to E$ and $X:J\to E$ be $\K$-linear maps such that
$ST=TS$, $SZ=ZS$, and $TZ=ZT$. Assume
\begin{equation}\label{eq:inside}
 \forall y\in J,\qquad Sy\in H,\quad Ty\in H,\quad Xy-Zy\in H,
\end{equation}
and
\begin{equation}\label{eq:independence}
 \forall a,u,v\in J,\qquad
 Za+SXu+TXv\in H\quad\Longrightarrow\quad a=u=v=0.
\end{equation}
Then $5h\geq9m$. If $5h=9m$, there exists a linear subspace
$F\subseteq J$ satisfying
\[
 F\neq\{0\},\qquad S(F)\subseteq F,\qquad T(F)\subseteq F.
\]
\end{lemma}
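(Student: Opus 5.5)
The plan is to pass to the quotient $E/H$ and turn the hypotheses into injectivity statements for maps between subspaces of $J$ and $H/J$, whose dimensions can then be compared. Write $D := X - Z$ on $J$, so that \eqref{eq:inside} says $D(J)\subseteq H$. The mechanism is this: whenever an input $Xu$ differs from $Zu$ by something that $S$ or $T$ sends into $H$, condition \eqref{eq:independence} can be applied with $a$ chosen as an $S$- or $T$-image, using $ZS=SZ$ and $ZT=TZ$.

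\emph{First-level count.} Let
\[
 J_1=\{y\in J : Dy\in J\}, \qquad \dim J_1\ \ge\ 2m-h .
\]
Take $u,v\in J_1$ with $Su+Tv\in J$, and set $a=Su+Tv$. Then
\[
 Za - SXu - TXv \;=\; -S(Du) - T(Dv)\;\in\; S(J)+T(J)\;\subseteq\; H .
\]
By \eqref{eq:independence} this forces $u=v=0$. Hence the map
\[
 J_1\oplus J_1 \to H/J,\qquad (u,v)\mapsto Su+Tv \bmod J,
\]
is injective. This gives $2(2m-h)\le h-m$, that is, $3h\ge 5m$.

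\emph{Second-level refinement.} The bound $5h\ge 9m$ needs a sharper count, which I would obtain by refining this argument one level deeper. Set $H_1:=J+S(J_1)+T(J_1)$. I then want to show that $S$ and $T$ send the relevant part of $H_1$ back into $H$. The key tool is $STu=TSu$, which lets a term $S(Tv)$ be rewritten as $T(Sv)$. Next define
\[
 J_2=\{y\in J_1 : Dy\in H_1\}.
\]
For $u,v\in J_2$, the difference $S(Du)+T(Dv)$ should again lie in $H$ modulo terms controlled by $J_1$. This would give a second injective map from pairs in $J_2$ into $H/H_1$. That map, combined with the lower bounds
\[
 \dim J_2\ \ge\ \dim J_1-(h-\dim H_1), \qquad \dim H_1\ \ge\ m+2\dim J_1,
\]
should yield a linear system in $h,m,\dim J_1,\dim J_2$ whose optimum is exactly $5h\ge 9m$. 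The weights $5$ and $9$ suggest that this is the correct depth, and I would find the combination of the inequalities by solving a small linear program.

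\emph{Equality case.} If $5h=9m$, every inequality used must be tight. In particular the sums $J+S(J_1)+T(J_1)$ are direct and of full expected size, and $D$ maps $J_2$ onto the expected complements. Tightness at the last level should force $S(J_2)\subseteq J_2$ and $T(J_2)\subseteq J_2$, with $J_2\ne 0$ because $\dim J_2\ge$ a positive multiple of $m$ when $5h=9m$. So the natural candidate is $F=J_2$, or else a stable subspace obtained from it by intersecting preimages. Nonemptiness would then come from the dimension count, and invariance from the vanishing of all slack.

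\emph{Main obstacle.} I expect the hardest step to be the second-level refinement. The bookkeeping of $S^2$, $ST$ and $T^2$ images is delicate, because a priori $S(H)\not\subseteq H$. The argument therefore has to route every such term through $J$, using commutativity and the definition of $J_1$, before \eqref{eq:independence} can be applied. I would first try to prove that $S(J_1)+T(J_1)$ meets $J$ trivially and that $S$, $T$ map $S(J_1)\cap T(J_1)$-type pieces into $H$. I would then adjust the definition of $J_2$ until the counting closes.
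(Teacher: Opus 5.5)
Your first-level step is correct. For $u,v\in J_1$ and $a=Su+Tv\in J$ you get $Za-SXu-TXv=-S(Du)-T(Dv)\in H$. This makes $(u,v)\mapsto Su+Tv \bmod J$ injective on $J_1\oplus J_1$ and gives $3h\ge 5m$. But $3h\ge 5m$ is strictly weaker than $5h\ge 9m$, and everything after it is a search plan rather than a proof.

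\emph{The second level gives nothing as written.} As literally defined, $J_2=\{y\in J_1: Dy\in H_1\}$ equals $J_1$, because $D(J_1)\subseteq J\subseteq H_1$. If you meant $y\in J$, then $S(Du)$ contains terms $S^2p$ and $STq$ with $p,q\in J_1$. No hypothesis puts these into $H$; only $S(J)$ and $T(J)$ are controlled. This is exactly the obstacle you flag, and it is left open.

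\emph{The equality case is unsupported.} $J_2$ is cut out by conditions on $D=X-Z$, and nothing in your argument explains why tightness would make it $S$- and $T$-invariant. The subspace that actually works, $F=\{y\in J: Sy\in J,\ Ty\in J\}$, never appears in your proposal.

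\emph{The missing idea.} Couple $u$ and $v$ through one auxiliary vector $c\in J$, and study on $J^3$ the system
$Su+Tv\in J$, $Du+Tc\in J$, $Dv-Sc\in J$.
Because $ST=TS$, we have $S(Du+Tc)+T(Dv-Sc)=S(Du)+T(Dv)$. Your own independence argument, with $a=-(Su+Tv)$, therefore still forces $u=v=0$, and the solution space is exactly $\{0\}\times\{0\}\times F$.

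Now count its codimension $3m-f$, where $f=\dim F$, in two ways.
\begin{enumerate}
\item Each condition takes values in $H$, so it has rank at most $h-m$. This gives $3m-f\le 3(h-m)$.
\item The kernel of the first condition contains $F\times F$, so its rank is at most $2(m-f)$. This gives $3m-f\le 2(m-f)+2(h-m)$.
\end{enumerate}
Together these yield $6m-3h\le f\le 2h-3m$, hence $5h\ge 9m$. Your bound $3h\ge 5m$ is exactly the first count with $c$ frozen at $0$; the gain comes from letting $c$ vary and adding the second count.

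If $5h=9m$, then $f=3m/5>0$. The rank bookkeeping then forces the kernel of the first condition to be exactly $F\times F$. For $y\in F$ we have $Ty,-Sy\in J$ and $S(Ty)+T(-Sy)=0\in J$, so $Ty,Sy\in F$. Thus $F$ itself is the required invariant subspace.
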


\begin{remark}
    A similar result was obtained by \citet{yang2026newlowerboundstensor}, but the case $5h = 9m$ is not addressed there. For clarity, we reproduce the relevant part of the argument here.
\end{remark}

\begin{proof}
Define
\[
 F=\{y\in J:Sy\in J\ \land\ Ty\in J\},\qquad f=\dim F.
\]
Consider the following homogeneous linear conditions on
$(u,v,c)\in J^3$:
\begin{equation}\label{eq:conditions}
 \begin{aligned}
 Su+Tv&\in J,\\
 Xu-Zu+Tc&\in J,\\
 Xv-Zv-Sc&\in J.
 \end{aligned}
\end{equation}
By~\eqref{eq:inside}, applying $S$ to the second expression and
$T$ to the third gives elements of $H$. Hence commutativity implies
\[
 \begin{aligned}
 &S(Xu-Zu+Tc)+T(Xv-Zv-Sc)\\
 &\hspace{2em}=SXu+TXv-Z(Su+Tv)\in H.
 \end{aligned}
\]
Since $Su+Tv\in J$, assumption~\eqref{eq:independence} applied
with $a=-(Su+Tv)$ yields $u=v=0$. The remaining conditions are
equivalent to $c\in F$. Conversely, each triple $(0,0,c)$ with
$c\in F$ satisfies~\eqref{eq:conditions}. Thus the solution
space is $\{0\}\times\{0\}\times F$, of codimension $3m-f$
in $J^3$.

Fix a basis of $H$ extending a basis of $J$.
By~\eqref{eq:inside}, each expression in~\eqref{eq:conditions}
belongs to $H$. Membership in $J$ is therefore equivalent to
the vanishing of its $h-m$ additional coordinates. Each condition
has rank at most $h-m$, and the rank of the combined system is
at most the sum of these ranks. Consequently,
\begin{equation}\label{eq:first-count}
 3m-f\leq3(h-m).
\end{equation}

The solution space of the first condition in $J\times J$
contains $F\times F$, so its codimension is at most $2m-2f$.
Applying this bound to the first condition and the preceding
bound to the other two gives
\begin{equation}\label{eq:second-count}
 3m-f\leq2(m-f)+2(h-m).
\end{equation}
Rearranging~\eqref{eq:first-count} and~\eqref{eq:second-count},
\[
 6m-3h\leq f\leq2h-3m.
\]
It follows that $9m\leq5h$.

Suppose $5h=9m$. Both endpoints coincide, giving $f=3m/5>0$.
The combined system has rank $3m-f=3(h-m)$, whereas the last
two conditions together have rank at most $2(h-m)$. The first
condition therefore has rank exactly $h-m=2(m-f)$. Its solution
space in $J\times J$ has dimension $2f$ and contains
$F\times F$, so it equals $F\times F$. Thus
\begin{equation}\label{eq:equality-pairs}
 \forall u,v\in J,\qquad
 Su+Tv\in J\quad\Longleftrightarrow\quad u\in F\ \land\ v\in F.
\end{equation}
Finally, if $y\in F$, then $Ty,-Sy\in J$ and
\[
 S(Ty)+T(-Sy)=0\in J.
\]
Equation~\eqref{eq:equality-pairs} gives $Ty,-Sy\in F$.
Since $F$ is a linear subspace, $S(F)\subseteq F$ and
$T(F)\subseteq F$.
\end{proof}

\begin{lemma}[Common eigenvector]\label{lem:support}
Let $\K$ be a field, let $r > 0$ be an integer, and let
$S,T\in\K^{r\times r}$ be diagonal matrices. Let
$F\subseteq\K^r$ be a linear subspace satisfying
\[
 F\neq\{0\},\qquad
 \forall y\in F,\quad Sy\in F\ \land\ Ty\in F.
\]
Then
\[
 \exists (s,t)\in\K^2, y\in F \setminus\{0\} :
 \qquad Sy=sy\ \land\ Ty=ty.
\]
\end{lemma}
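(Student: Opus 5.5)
Since $S$ and $T$ are diagonal, $\K^r$ splits into joint eigenspaces. For each pair $(s,t)\in\K^2$, let
\[
V_{s,t}=\operatorname{span}\{e_i : S_{ii}=s,\ T_{ii}=t\},
\]
where $e_i$ are the standard basis vectors. Then $\K^r=\bigoplus_{(s,t)}V_{s,t}$, and only finitely many summands are nonzero, namely those indexed by the pairs $(S_{ii},T_{ii})$. Every nonzero vector of $V_{s,t}$ satisfies $Sy=sy$ and $Ty=ty$. It therefore suffices to show that $F\cap V_{s,t}\neq\{0\}$ for some pair $(s,t)$.

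The plan is to prove that $F=\bigoplus_{(s,t)}(F\cap V_{s,t})$. Because $F\neq\{0\}$, at least one summand must then be nonzero, and any nonzero $y$ in that summand is the required common eigenvector. To get the decomposition, we produce the projection onto each $V_{s,t}$ as a polynomial in $S$ and $T$. The hypothesis says $F$ is invariant under $S$ and $T$, hence under every polynomial in $S,T$. So the projections of any $y\in F$ lie in $F$, and $y$ is their sum.

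The only step requiring care is constructing these projections over an arbitrary field, possibly finite. Let $\Lambda=\{(S_{ii},T_{ii}):1\le i\le r\}$ be the finite set of pairs that occur. For a fixed $(s,t)\in\Lambda$, define
\[
P_{s,t}=\prod_{(s',t')\in\Lambda,\,(s',t')\neq(s,t)} Q_{s',t'},
\]
where each factor is chosen as follows. If $s'\neq s$, take $Q_{s',t'}=(S-s'I)/(s-s')$. Otherwise $t'\neq t$, and take $Q_{s',t'}=(T-t'I)/(t-t')$. Each factor is diagonal, equals $1$ on the $(s,t)$ coordinates, and equals $0$ on the $(s',t')$ coordinates. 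Hence $P_{s,t}$ is the diagonal $0/1$ matrix projecting onto $V_{s,t}$. Only division by nonzero field elements is used, so this works over any field.

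Finally, $\sum_{(s,t)\in\Lambda}P_{s,t}=I$. For $y\in F$ we have $P_{s,t}y\in F\cap V_{s,t}$, and these components sum to $y$. Choosing $y\neq0$ in $F$, some component $P_{s,t}y$ is nonzero. It lies in $F\setminus\{0\}$ and satisfies $S(P_{s,t}y)=s\,P_{s,t}y$ and $T(P_{s,t}y)=t\,P_{s,t}y$, which proves the lemma.
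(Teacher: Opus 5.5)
Your proof is correct, but it takes a different route from the paper.

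The paper uses a minimal-support argument. It picks $y\in F\setminus\{0\}$ with the fewest nonzero coordinates and fixes $j$ with $y_j\neq0$. For $D\in\{S,T\}$, the vector $(D-D_{jj}I)y$ lies in $F$ and has support strictly inside that of $y$, since its $j$-th coordinate vanishes. So it must be zero, and $y$ is a common eigenvector with $(s,t)=(S_{jj},T_{jj})$.

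You instead build the joint spectral projections $P_{s,t}$ as Lagrange-type products of the operators $S-s'I$ and $T-t'I$, and deduce $F=\bigoplus_{(s,t)}(F\cap V_{s,t})$. Your case split on $s'\neq s$ versus $t'\neq t$ correctly ensures you only divide by nonzero scalars, so the construction works over finite fields. The empty product when $|\Lambda|=1$ gives $I$, as it should.

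Your version proves more: $F$ is spanned by common eigenvectors of $S$ and $T$, not merely that it contains one. The price is an explicit construction and the bookkeeping that each factor is $1$ on $V_{s,t}$ and vanishes on $V_{s',t'}$.

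The paper's argument is shorter. It applies each operator $D-D_{jj}I$ only once and otherwise needs just the choice of a minimal element. That suits the Lean formalization, where a well-founded minimality choice is easier to handle than building and verifying a product of projections.
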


\begin{proof}
Choose $y\in F\setminus\{0\}$ such that
\[
 \forall z\in F\setminus\{0\},\qquad
 \#\{i\mid y_i\neq0\}\leq\#\{i\mid z_i\neq0\}.
\]
All coordinate indices range over $\{1,\ldots,r\}$.
Since $y\neq0$, choose $j\in\{1,\ldots,r\}$ with $y_j\neq0$.
For an arbitrary $D\in\{S,T\}$, set $w=(D-D_{jj}I)y$.
The assumption $Dy\in F$ and linearity of $F$ give
\[
 w=Dy-D_{jj}y\in F.
\]
Since $D$ is diagonal,
\[
 \begin{aligned}
 &\forall i\in\{1,\ldots,r\},\qquad w_i=(D_{ii}-D_{jj})y_i,\\
 &\forall i\in\{1,\ldots,r\},\qquad y_i=0\ \Longrightarrow\ w_i=0,\\
 &w_j=(D_{jj}-D_{jj})y_j=0.
 \end{aligned}
\]
Consequently,
\[
 \{i\mid w_i\neq0\}\subseteq\{i\mid y_i\neq0\}\setminus\{j\},
 \qquad
 \#\{i\mid w_i\neq0\}<\#\{i\mid y_i\neq0\}.
\]
If $w\neq0$, this contradicts the defining inequality for $y$
with $z=w$. Thus
\[
 \forall D\in\{S,T\},\qquad (D-D_{jj}I)y=0.
\]
Taking $(s,t)=(S_{jj},T_{jj})$ yields $Sy=sy$ and $Ty=ty$.
\end{proof}

\section{The main theorem}

\mainthm*

\begin{proof}[Proof of Theorem~\ref{thm:main}]
\textbf{Normalization.}
Identity~\eqref{eq:algorithm} is determined by its values on pairs
of matrix units. Extending its coefficients to $\K(t)$ therefore
gives an algorithm of the same length over $\K(t)$, and hence
\[
 R_{\K}(\MM{3,2,m})\geq R_{\K(t)}(\MM{3,2,m}).
\]
It suffices to prove the bound when $\K$ is infinite.

Set $r=R_{\K}(\MM{3,2,m})$ and choose an algorithm of length $r$.
Minimality implies $\alpha_i\neq0$ for all $1\leq i\leq r$.
By Lemma~\ref{lem:normalization}, the algorithm may be chosen to
satisfy, with $s_i=\alpha_i(E_{21})$ and $t_i=\alpha_i(E_{31})$,
\[
 \forall i\in\{1,\ldots,r\},\qquad \alpha_i(E_{11})=1,
\]
and
\[
 \forall(s,t)\in\K^2,\qquad
 \dim_{\K}\operatorname{span}_{\K}
 \bigl\{\alpha_i\mid 1\leq i\leq r,\ s_i=s,\ t_i=t\bigr\}\leq2.
\]
Define
\[
 S=\operatorname{diag}(s_i),\qquad
 T=\operatorname{diag}(t_i),\qquad
 Z=\operatorname{diag}(\alpha_i(E_{12})).
\]
Then $ST=TS$, $SZ=ZS$, and $TZ=ZT$.

\textbf{Construction of the subspaces.}
Define the linear map
\[
 W:\K^r\longrightarrow\K^{3\times m},\qquad
 W(z) =\sum_{i=1}^{r}z_i C_i,
\]
and set $H=\ker W$.
By~\eqref{eq:algorithm}, $\operatorname{Im}W$ contains every
product $AB$, hence every $3\times m$ matrix unit. Therefore
$W$ is surjective, and the rank-nullity theorem gives
\begin{equation}\label{eq:H}
 \dim H=r-3m.
\end{equation}

Define the linear subspace
\begin{equation}\label{eq:J}
 J=\left\{
 \left(\beta_i\!\begin{pmatrix}0\\v\end{pmatrix}\right)_{i=1}^{r}
 :v\in\K^{1\times m}\right\}.
\end{equation}
Identity~\eqref{eq:algorithm}, evaluated at $A=E_{12}$, implies
\[
 \forall v\in\K^{1\times m},\qquad
 \left(\forall i\in\{1,\ldots,r\},\
 \beta_i\!\begin{pmatrix}0\\v\end{pmatrix}=0\right)
 \ \Longrightarrow\ v=0.
\]
Consequently, the linear parametrization of $J$ in~\eqref{eq:J}
is injective, so $\dim J=m$.
Using $\alpha_i(E_{11})=1$ in~\eqref{eq:algorithm} gives
\[
 \forall v\in\K^{1\times m},\qquad
 W\left(\left(\beta_i\!\begin{pmatrix}0\\v\end{pmatrix}\right)_i\right)
 =E_{11}\begin{pmatrix}0\\v\end{pmatrix}=0.
\]
Thus $J\subseteq H$. Define $X:J\to\K^r$ by
\begin{equation}\label{eq:X}
 X\left(\left(\beta_i\!\begin{pmatrix}0\\v\end{pmatrix}\right)_i\right)
 =\left(\beta_i\!\begin{pmatrix}v\\0\end{pmatrix}\right)_i,
 \qquad \forall v\in\K^{1\times m}.
\end{equation}
The injectivity of the parametrization implies that $X$ is
well-defined and linear.

\textbf{Application of the dimension inequality.}
For $v\in\K^{1\times m}$ and
$y=(\beta_i(\binom{0}{v}))_i\in J$, substitution of the matrix
units into~\eqref{eq:algorithm} yields
\begin{equation}\label{eq:table}
 \begin{aligned}
 W(y)&=W(Sy)=W(Ty)=0,\\
 W(Xy)&=W(Zy)=\begin{pmatrix}v\\0\\0\end{pmatrix},\qquad
 W(SXy)=\begin{pmatrix}0\\v\\0\end{pmatrix},\qquad
 W(TXy)=\begin{pmatrix}0\\0\\v\end{pmatrix}.
 \end{aligned}
\end{equation}
Indeed, these equalities follow respectively by taking
$A=E_{11},E_{21},E_{31}$ with $B=\binom{0}{v}$;
$A=E_{11}$ with $B=\binom{v}{0}$;
$A=E_{12}$ with $B=\binom{0}{v}$; and
$A=E_{21},E_{31}$ with $B=\binom{v}{0}$.
In particular,
\[
 \forall y\in J,\qquad Sy\in H,\quad Ty\in H,\quad Xy-Zy\in H.
\]
For $u_1,u_2,u_3\in J$, let $v_1,v_2,v_3$ be their unique
parameters in~\eqref{eq:J}. Equation~\eqref{eq:table} gives
\[
 W(Zu_1+SXu_2+TXu_3)=\begin{pmatrix}v_1\\v_2\\v_3\end{pmatrix}.
\]
Since $H=\ker W$ and the parametrization is injective,
\[
 \forall u_1,u_2,u_3\in J,\qquad
 Zu_1+SXu_2+TXu_3\in H\ \Longrightarrow\ u_1=u_2=u_3=0.
\]
All hypotheses of Lemma~\ref{lem:count} are therefore satisfied,
and
\[
 5(r-3m)=5\dim H\geq9m.
\]
Thus $r\geq24m/5$.

\textbf{Exclusion of equality.}
Suppose $r=24m/5$. Then $5\dim H=9m$, and
Lemma~\ref{lem:count} yields a subspace $F\subseteq J$ with
$F\neq\{0\}$, $S(F)\subseteq F$, and $T(F)\subseteq F$.
By Lemma~\ref{lem:support}, there exist
$y\in F\setminus\{0\}$ and $(s,t)\in\K^2$ such that
$Sy=sy$ and $Ty=ty$. Comparing coordinates gives
\[
 \forall i\in\{1,\ldots,r\},\qquad
 y_i\neq0\ \Longrightarrow\ (s_i,t_i)=(s,t).
\]
The normalization property in Lemma~\ref{lem:normalization} implies
\[
 \dim\operatorname{span}\{\alpha_i:y_i\neq0\}\leq2.
\]
Since $0\neq y\in J$, there exists a unique nonzero
$v\in\K^{1\times m}$ with
$y=(\beta_i(\binom{0}{v}))_i$. Fix $B=\binom{0}{v}$.
Then~\eqref{eq:algorithm} becomes
\begin{equation}\label{eq:contradiction}
 AB=\sum_{i:y_i\neq0}\alpha_i(A)y_i C_i,
 \qquad\forall A\in\K^{3\times2}.
\end{equation}
The linear map defined by the right-hand side has rank at most
two: expansion of the forms $\alpha_i$ in a basis of their span
expresses its image in the span of at most two fixed matrices.
On the other hand,
\[
 \{AB:A\in\K^{3\times2}\}=\{zv:z\in\K^3\}.
\]
The map $z\mapsto zv$ is injective, since $v\neq0$.
Thus the left-hand side of~\eqref{eq:contradiction}, as a linear
map of $A$, has rank three, a contradiction. It follows that
$r>24m/5$.
\end{proof}

\section{Automated verification}

The entire argument above has been checked mechanically in Lean~4 against mathlib (Lean \texttt{4.19.0}).

\bibliographystyle{abbrvnat}
\bibliography{main}

@article{yang2026newlowerboundstensor,
      title={New lower bounds on tensor rank of $(2,n,m)$ matrix multiplication with GPT-6}, 
      author={Jason Yang},
      year={2026},
      eprint={2609.14393},
      archivePrefix={arXiv},
      primaryClass={cs.CC},
      url={https://arxiv.org/abs/2609.14393}, 
      journal={arXiv preprint arXiv:2609.14393},
}

@article{hopcroft1971minimizing,
  title={On minimizing the number of multiplications necessary for matrix multiplication},
  author={Hopcroft, John E and Kerr, Leslie R},
  journal={SIAM Journal on Applied Mathematics},
  volume={20},
  number={1},
  pages={30--36},
  year={1971},
  publisher={SIAM}
}

@article{ja1985improved,
  title={Improved lower bounds for some matrix multiplication problems},
  author={Ja'Ja', Joseph and Takche, Jean},
  journal={Information processing letters},
  volume={21},
  number={3},
  pages={123--127},
  year={1985},
  publisher={Elsevier}
}

@article{blaser1999lower,
  title={Lower bounds for the multiplicative complexity of matrix multiplication},
  author={Bl{\"a}ser, Markus},
  journal={computational complexity},
  volume={8},
  number={3},
  pages={203--226},
  year={1999},
  publisher={Springer}
}

@misc{shitov2026,
    title={Determining the tensor rank of $(2, 2, M)$ matrix multiplication with GPT-5.6 Sol.}, 
      author={Yaroslav Shitov},
      year={2026},
      url={https://doi.org/10.13140/RG.2.2.29432.61441}, 
}
\end{document}